\journal{}
\newtheorem{definition}{Definition}
\newtheorem{proposition}{Proposition}
\newtheorem{lemma}{Lemma}
\newtheorem{theorem}{Theorem}
\begin{document}

\begin{frontmatter}

\title{Properties of zero-determinant strategies in multichannel games}

\author{Masahiko Ueda\corref{mycorrespondingauthor}}
\ead{m.ueda@yamaguchi-u.ac.jp}
\address{Graduate School of Sciences and Technology for Innovation, Yamaguchi University, Yamaguchi 753-8511, Japan}

\begin{abstract}
Controlling payoffs in repeated games is one of the important topics in control theory of multi-agent systems.
Recently proposed zero-determinant strategies enable players to unilaterally enforce linear relations between payoffs.
Furthermore, based on the mathematics of zero-determinant strategies, regional payoff control, in which payoffs are enforced into some feasible regions, has been discovered in social dilemma situations.
More recently, theory of payoff control was extended to multichannel games, where players parallelly interact with each other in multiple channels.
However, the existence of payoff-controlling strategies in multichannel games seems to require the existence of payoff-controlling strategies in some channels, and properties of zero-determinant strategies specific to multichannel games are still not clear.
In this paper, we elucidate properties of zero-determinant strategies in multichannel games.
First, we relate the existence condition of zero-determinant strategies in multichannel games to that of zero-determinant strategies in each channel.
We then show that the existence of zero-determinant strategies in multichannel games requires the existence of zero-determinant strategies in some channels.
This result implies that the existence of zero-determinant strategies in multichannel games is tightly restricted by structure of games played in each channel.
\end{abstract}

\begin{keyword}
Repeated games; Zero-determinant strategies; Multichannel games; Payoff control
\end{keyword}

\end{frontmatter}


\section{Introduction}
\label{sec:intro}
Theory of payoff control has gradually attracted attention in repeated games in this decade.
This originates from the discovery of zero-determinant (ZD) strategies in the repeated prisoner's dilemma game \cite{PreDys2012}.
A ZD strategy is a class of strategies in repeated games which unilaterally enforce linear relationships between payoffs of players.
ZD strategies in the repeated prisoner's dilemma game contain the equalizer strategy \cite{BNS1997}, which unilaterally sets the opponent's payoff, the extortionate strategy \cite{PreDys2012}, which is unbeatable, and the generous ZD strategy \cite{StePlo2013}, which cannot win but enforces the cooperative Nash equilibrium.
Because it had been commonly accepted that such unilateral control of payoffs is impossible, the discovery of ZD strategies was received with surprise.
After the discovery, ZD strategies have been extended to broader situations \cite{HRZ2015,McAHau2016,HDND2016,McAHau2017,MamIch2020,Ued2022c}, and general mathematical properties of ZD strategies have been specified \cite{UedTan2020,TSWW2021,Ued2022b,CheLi2024}.
Particularly, Akin investigated more general strategies which enforce payoffs into some regions (not limited to hyperplanes) in the repeated prisoner's dilemma game \cite{Aki2016}.
Such strategies are more general than ZD strategies, and can be called payoff-controlling strategies.
Hilbe et al. further extended the Akin's result and defined several important classes of strategies in the context of direct reciprocity, which also allow regional payoff control \cite{HTS2015}.
Hao et al. introduced a method of regional payoff control in which the regions are realized by linear boundaries \cite{HLZ2018}.
In addition, Li and Hao extended regional payoff control to the repeated public goods game \cite{LiHao2019}.
See Ref. \cite{HCN2018} for a review of ZD strategies and more general payoff-controlling strategies.
Theory of payoff control is useful because it does not assume that all opponents are enough rational, in contrast to the classical game theory.

Recently, a new class of repeated games, called multichannel games, was introduced \cite{DHNH2020}.
A multichannel game consists of multiple repeated games simultaneously played by the same players, and players' choices in one channel can affect decisions in other channels.
Multichannel games model more realistic situations, where people are involved in multiple games in parallel, than traditional (one-channel) repeated games.
For example, appliance manufacturers compete in refrigerator market as well as in washing machine market.
It should be also mentioned that multichannel games have common characteristics with multigames \cite{WSP2014}, where different games are played in parallel in structured populations.
Donahue et al. showed that linkage between channels promotes cooperation in the multichannel donation game, because players can use all channels in order to punish the opponent's defection in one channel \cite{DHNH2020}.
Evolution of cooperation in multichannel social dilemma games has been extensively investigated \cite{WCSet2024,QCW2024,BasSen2024}.
A multichannel game is also placed in one of concurrent games \cite{RHH2025}, where opponents in each channel can be different.

Quite recently, theory of payoff control was extended to multichannel games \cite{SCFet2025}.
In Ref. \cite{SCFet2025}, Shi et al. specified strategies controlling the upper bound of the opponent's payoff, by extending the results of Hao et al. \cite{HLZ2018}.
However, their results fundamentally seem to use the existence of payoff-controlling strategies in some channels.
Existence of completely nontrivial payoff-controlling strategies in multichannel games, which do not require the existence of payoff-controlling strategies in some channels, is still unclear.
Because theory of ZD strategies underlies theory of payoff control \cite{HCN2018}, establishing theory of ZD strategies in multichannel games is necessary for further application of payoff control to the problem of evolution of cooperation in multichannel games.

The purpose of this paper is clarifying properties of ZD strategies in multichannel games.
First, we extend ZD strategies to multichannel games, and relate the existence condition of ZD strategies in multichannel games to that of ZD strategies in each channel.
We then show that equalizer strategies specific to multichannel games, which unilaterally set the opponent's payoff in two-player games, do not exist except for trivial cases, which require the existence of equalizer strategies in some channels.
This result can be extended to general ZD strategies.
Furthermore, we also show that, if a fair ZD strategy, which unilaterally equalizes payoffs of two players, exists in a multichannel two-player symmetric game, then a fair ZD strategy must exist in every channel.
These results imply that the existence of ZD strategies in multichannel games is tightly restricted by structure of games played in each channel.
We also provide several examples of ZD strategies in multichannel games.

This paper is organized as follows.
In Section \ref{sec:preliminaries}, we introduce multichannel games and ZD strategies.
In Section \ref{sec:results}, we provide our main results related to the existence of ZD strategies in multichannel games.
Section \ref{sec:conclusion} is devoted to discussion and conclusion.

\section{Preliminaries}
\label{sec:preliminaries}

\subsection{Multichannel game}
A \emph{multichannel game} consists of $M\geq 1$ repeated games \cite{DHNH2020}.
Formally, a stage game of a multichannel game is defined as $G:=\left\{ G^{(\mu)} \right\}_{\mu=1}^M$ with $G^{(\mu)} :=\left( \mathcal{N}, \left\{ A_j^{(\mu)}  \right\}_{j\in \mathcal{N}}, \left\{ s_j^{(\mu)} \right\}_{j\in \mathcal{N}} \right)$ $(\mu=1, \cdots, M)$, where $M\geq 1$ is the number of channels, $\mathcal{N}:=\{ 1, \cdots, N \}$ is the set of $N\geq 1$ players, $A_j^{(\mu)}$ is the set of actions of player $j\in \mathcal{N}$ in channel $\mu$, and $s_j^{(\mu)}: \prod_{j\in \mathcal{N}}A_j^{(\mu)} \rightarrow \mathbb{R}$ is the payoff function of player $j\in \mathcal{N}$ in channel $\mu$.
We introduce the set of action profiles in channel $\mu$ by $\mathcal{A}^{(\mu)} :=\prod_{j\in \mathcal{N}} A_j^{(\mu)}$ and the set of action profiles by $\mathcal{A} :=\prod_{\mu=1}^M \mathcal{A}^{(\mu)}$.
Similarly, we introduce an action profile in channel $\mu$ by $\bm{a}^{(\mu)} \in \mathcal{A}^{(\mu)}$ and an action profile by $\bm{a} =\left( \bm{a}^{(1)}, \cdots, \bm{a}^{(M)} \right) \in \mathcal{A}$.
For simplicity, we collectively write an action and the set of actions of player $j \in \mathcal{N}$ as $\vec{a}_j:= \left( a_j^{(1)}, \cdots, a_j^{(M)} \right)$ and $\mathcal{A}_j := \prod_{\mu=1}^M A_j^{(\mu)}$, respectively.
Therefore, an action profile $\bm{a}$ can be written in two forms, $\bm{a}=\left( \bm{a}^{(1)}, \cdots, \bm{a}^{(M)} \right)=\left( \vec{a}_1, \cdots, \vec{a}_N \right)$.
Furthermore, we also introduce the standard notation $\vec{a}_{-j}:= \left( \vec{a}_1, \cdots, \vec{a}_{j-1}, \vec{a}_{j+1}, \cdots \vec{a}_N \right)$ for the action profile of players other than $j$.
When an action profile is $\bm{a}$, the total payoff of player $j$ in a one-shot game is given by
\begin{align}
 \tilde{s}_j \left( \bm{a} \right) &:= \sum_{\mu=1}^M s_j^{(\mu)} \left( \bm{a}^{(\mu)} \right).
 \label{eq:payoff_stage}
\end{align}
In this paper, we assume that $A_j^{(\mu)}$ is a finite set for all $j\in \mathcal{N}$ and all $\mu \in \{ 1, \cdots, M \}$.

A multichannel game is defined by the infinitely repeated version of $G$.
In repeated games \cite{MaiSam2006}, a player can choose an action at each round by using all information of action profiles realized in past.
We call such plans of players as \emph{behavior strategies}.
When we write an action of player $j$ in channel $\mu$ at round $t\geq 1$ as $a_j^{(\mu)}(t)$, the payoff of player $j$ in the multichannel game is defined by
\begin{align}
 \tilde{\mathcal{S}}_j &:= \lim_{T\rightarrow \infty} \frac{1}{T} \sum_{t=1}^T \mathbb{E} \left[ \tilde{s}_j\left( \bm{a}(t) \right) \right],
 \label{eq:payoff_repeated}
\end{align}
where $\mathbb{E}[B]$ represents the expectation of the quantity $B$ with respect to behavior strategies of all players.
For convenience, we write a probability distribution of an action profile at the $t$-th round as $P_t \left( \bm{a} \right)$.
Furthermore, we introduce the limit probability distribution
\begin{align}
 P^* \left( \bm{a} \right) &:= \lim_{T\rightarrow \infty} \frac{1}{T} \sum_{t=1}^T P_{t} \left( \bm{a} \right)
 \label{eq:P_star}
\end{align}
and define the expectation of the quantity $D:\mathcal{A}\rightarrow \mathbb{R}$ with respect to $P^*$ by $\left\langle D \right\rangle^{*}$.
We remark that the symbol $^*$ represents the quantity related to the limit probability distribution $P^*$.
Then, the payoff in a multichannel game is rewritten as $\tilde{\mathcal{S}}_j=\left\langle \tilde{s}_j \right\rangle^{*}$ for all $j \in \mathcal{N}$.

\subsection{Zero-determinant strategies}
One of the simplest classes of behavior strategies is \emph{memory-one} strategies \cite{Leh1988}.
A memory-one strategy of player $j$ is given by time-independent conditional probabilities $\left\{ T_j\left( \vec{a}_j | \bm{a}^\prime \right) \right\}_{\vec{a}_j\in \mathcal{A}_j, \bm{a}^\prime \in \mathcal{A}}$, where $T_j\left( \vec{a}_j | \bm{a}^\prime \right)$ is probability of taking action $\vec{a}_j$ when an action profile in the previous round was $\bm{a}^\prime$.
When each channel is unlinked to the other channels, a memory-one strategy becomes in the form
\begin{align}
 T_j\left( \vec{a}_j | \bm{a}^\prime \right) &= \prod_{\mu=1}^M T_j^{(\mu)} \left( a_j^{(\mu)} | \bm{a}^{\prime(\mu)} \right).
\end{align}
In multichannel games, players generally can choose actions by using information from other channels.

For memory-one strategies, the following result is known as the Akin's lemma \cite{Aki2016} in multichannel games.
\begin{lemma}[\cite{SCFet2025}]
\label{lem:Akin}
For memory-one strategies of player $j$, when we define
\begin{align}
 \hat{T}_j\left( \vec{a}_j | \bm{a}^\prime \right) &:= T_j\left( \vec{a}_j | \bm{a}^\prime \right) - \delta_{\vec{a}_j, \vec{a}_j^\prime},
\end{align}
then the equality
\begin{align}
 \sum_{\bm{a}^\prime} \hat{T}_j\left( \vec{a}_j | \bm{a}^\prime \right) P^* \left( \bm{a}^\prime \right) &= 0
\end{align}
holds for all $\vec{a}_j \in \mathcal{A}_j$.
\end{lemma}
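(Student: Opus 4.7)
The plan is to follow the standard Akin's lemma argument, adapting the bookkeeping to the multichannel setting by marginalizing over the action profile coordinate corresponding to player $j$. The key observation is that the quantity $\hat{T}_j(\vec{a}_j|\bm{a}')$ measures the one-step change in the marginal probability that player $j$ plays $\vec{a}_j$, and a Cesàro average of such one-step changes must vanish in the long-run limit because the marginals are bounded.

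Concretely, I would first define the marginal distribution $p_t(\vec{a}_j) := \sum_{\bm{a}': \vec{a}_j'=\vec{a}_j} P_t(\bm{a}')$ and observe two identities. Using the memory-one property of $T_j$ together with the law of total probability, the probability that player $j$'s action profile at round $t+1$ equals $\vec{a}_j$ is
\begin{align}
 p_{t+1}(\vec{a}_j) &= \sum_{\bm{a}'} T_j(\vec{a}_j|\bm{a}')\, P_t(\bm{a}'),
\end{align}
while trivially
\begin{align}
 p_t(\vec{a}_j) &= \sum_{\bm{a}'} \delta_{\vec{a}_j,\vec{a}_j'}\, P_t(\bm{a}').
\end{align}
Subtracting gives $p_{t+1}(\vec{a}_j)-p_t(\vec{a}_j) = \sum_{\bm{a}'} \hat{T}_j(\vec{a}_j|\bm{a}')\, P_t(\bm{a}')$.

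Next, I would average this identity over $t=1,\dots,T$. The left-hand side telescopes to $\frac{1}{T}\bigl[p_{T+1}(\vec{a}_j)-p_1(\vec{a}_j)\bigr]$, which tends to $0$ as $T\to\infty$ since $|p_t(\vec{a}_j)|\leq 1$. The right-hand side is exactly $\sum_{\bm{a}'} \hat{T}_j(\vec{a}_j|\bm{a}')\cdot \frac{1}{T}\sum_{t=1}^T P_t(\bm{a}')$, which converges to $\sum_{\bm{a}'}\hat{T}_j(\vec{a}_j|\bm{a}')\,P^*(\bm{a}')$ by the definition \eqref{eq:P_star} of $P^*$ and the finiteness of $\mathcal{A}$. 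Equating the two limits yields the desired identity.

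There is essentially no conceptual obstacle: the multichannel structure enters only through the fact that $\vec{a}_j$ is itself a tuple $(a_j^{(1)},\dots,a_j^{(M)})$ and $\bm{a}'$ ranges over $\mathcal{A}=\prod_\mu \mathcal{A}^{(\mu)}$, but the marginalization and telescoping arguments are indifferent to this internal structure. The only point to be slightly careful about is justifying the interchange of the $T\to\infty$ limit with the finite sum over $\bm{a}'$, which is immediate because $\mathcal{A}$ is finite.
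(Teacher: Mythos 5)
Your proof is correct, and it is the standard telescoping argument for Akin's lemma; the paper itself states Lemma \ref{lem:Akin} without proof, importing it from \cite{SCFet2025}, and your argument is exactly the one used in that line of literature (Akin's original proof adapted verbatim to the multichannel setting). You also correctly use only player $j$'s memory-one property --- the opponents may have arbitrary history-dependent strategies, since their round-$(t+1)$ actions are marginalized out --- and the only implicit assumption, that the Ces\`aro limit defining $P^*$ in Eq.~(\ref{eq:P_star}) exists, is the same one the paper makes in its definition.
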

Therefore, if $\hat{T}_j$ is related to payoffs, player $j$ can unilaterally control payoffs.
This is an underlying idea of zero-determinant strategies \cite{PreDys2012,McAHau2016,UedTan2020}.
\begin{definition}
\label{def:ZDS}
A memory-one strategy of player $j$ is a \emph{zero-determinant (ZD) strategy} when it can be written in the form
\begin{align}
 \sum_{\vec{a}_j} c_{\vec{a}_j} \hat{T}_j\left( \vec{a}_j | \bm{a}^{\prime} \right) &= \sum_{k\in \mathcal{N}} \alpha_k \tilde{s}_k \left( \bm{a}^{\prime} \right) + \alpha_0 \quad \left( \forall \bm{a}^{\prime} \in \mathcal{A} \right)
 \label{eq:ZDS}
\end{align}
with some coefficients $\left\{ c_{\vec{a}_j} \right\}$ and $\left\{ \alpha_k \right\}$, and the both sides are not identically zero.
\end{definition}
As a direct consequence of Lemma \ref{lem:Akin}, when we apply $\sum_{\bm{a}^\prime} P^* \left( \bm{a}^\prime \right)$ to the both sides of Eq. (\ref{eq:ZDS}), we obtain
\begin{align}
 0 &= \sum_{k\in \mathcal{N}} \alpha_k \left\langle \tilde{s}_k \right\rangle^{*} + \alpha_0.
 \label{eq:linear}
\end{align}
That is, a ZD strategy (\ref{eq:ZDS}) unilaterally enforces a linear relation between payoffs.
Below we define $\tilde{B}\left( \bm{a} \right) := \sum_{k\in \mathcal{N}} \alpha_k \tilde{s}_k \left( \bm{a} \right) + \alpha_0$.
When a ZD strategy of player $j$ unilaterally enforces $\left\langle \tilde{B} \right\rangle^{*}=0$, we call such ZD strategy a ZD strategy controlling $\tilde{B}$.

Recently, a necessary and sufficient condition for the existence of ZD strategies was specified when action sets of all players are finite sets \cite{Ued2022b}.
Because we here consider multichannel games where action sets of all players are finite sets, we can apply this result to our problem.
\begin{proposition}[\cite{Ued2022b}]
\label{prop:existence}
A ZD strategy of player $j$ controlling $\tilde{B}$ exists if and only if there exist two actions $\overline{\vec{a}}_j, \underline{\vec{a}}_j \in \mathcal{A}_j$ of player $j$ such that
\begin{align}
 \tilde{B} \left( \overline{\vec{a}}_j, \vec{a}_{-j} \right) &\geq 0 \quad \left( \forall \vec{a}_{-j} \right) \nonumber \\
 \tilde{B} \left( \underline{\vec{a}}_j, \vec{a}_{-j} \right) &\leq 0 \quad \left( \forall \vec{a}_{-j} \right),
 \label{eq:condition_exsitence}
\end{align}
and $\tilde{B}$ is not identically zero.
\end{proposition}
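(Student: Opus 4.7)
The plan is to read (\ref{eq:ZDS}) as a pointwise linear constraint on the conditional distribution $T_j(\cdot|\bm{a}')$, separately for each previous profile $\bm{a}'$. Substituting $\hat{T}_j = T_j - \delta$ into (\ref{eq:ZDS}) rewrites it as
\begin{align*}
\sum_{\vec{a}_j} c_{\vec{a}_j} T_j(\vec{a}_j | \bm{a}') = c_{\vec{a}_j'} + \tilde{B}(\bm{a}').
\end{align*}
Because $T_j(\cdot|\bm{a}')$ is a probability distribution on the finite set $\mathcal{A}_j$, the left-hand side ranges exactly over the interval $[c^{\min}, c^{\max}]$, where $c^{\min} := \min_{\vec{a}_j} c_{\vec{a}_j}$ and $c^{\max} := \max_{\vec{a}_j} c_{\vec{a}_j}$, with every value in that interval realized by a convex combination of a minimizing and a maximizing action. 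Hence a ZD strategy controlling $\tilde{B}$ exists if and only if coefficients $\{c_{\vec{a}_j}\}$ can be chosen so that $c_{\vec{a}_j'} + \tilde{B}(\bm{a}') \in [c^{\min}, c^{\max}]$ holds for every $\bm{a}'$ and $c^{\max} > c^{\min}$.

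For the only-if direction I let $\overline{\vec{a}}_j$ and $\underline{\vec{a}}_j$ be actions achieving $c^{\min}$ and $c^{\max}$ respectively. These are distinct because if $c^{\max}=c^{\min}$ the left-hand side of (\ref{eq:ZDS}) is identically zero, contradicting the nontriviality clause in Definition \ref{def:ZDS}. Evaluating the interval condition at $\vec{a}_j' = \overline{\vec{a}}_j$ yields $c^{\min} + \tilde{B}(\overline{\vec{a}}_j, \vec{a}_{-j}') \geq c^{\min}$, i.e., $\tilde{B}(\overline{\vec{a}}_j, \vec{a}_{-j}') \geq 0$ for every $\vec{a}_{-j}'$; symmetrically, $\vec{a}_j' = \underline{\vec{a}}_j$ gives $\tilde{B}(\underline{\vec{a}}_j, \vec{a}_{-j}') \leq 0$, and nontriviality of $\tilde{B}$ is part of the definition.

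For the if direction, given the two hypothesized actions and $\tilde{B} \not\equiv 0$, I would construct the coefficients explicitly. Set $c_{\overline{\vec{a}}_j} := 0$, $c_{\underline{\vec{a}}_j} := K$, and $c_{\vec{a}_j} := K/2$ on the remaining actions, where $K > 2\max_{\bm{a}}|\tilde{B}(\bm{a})|$ (finite since $\mathcal{A}$ is finite). A case check on $\vec{a}_j'$ then verifies $c_{\vec{a}_j'} + \tilde{B}(\bm{a}') \in [0,K]$: for $\vec{a}_j' = \overline{\vec{a}}_j$ this reduces to $0 \leq \tilde{B} \leq K$ via (\ref{eq:condition_exsitence}); for $\vec{a}_j' = \underline{\vec{a}}_j$ to $-K \leq \tilde{B} \leq 0$ via (\ref{eq:condition_exsitence}); otherwise to $|\tilde{B}| \leq K/2$ by the choice of $K$. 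For each $\bm{a}'$ I then define $T_j(\cdot|\bm{a}')$ as the convex combination of $\overline{\vec{a}}_j$ and $\underline{\vec{a}}_j$ (with zero weight on other actions) that realizes the required value; this yields a valid memory-one strategy satisfying (\ref{eq:ZDS}), and since $\tilde{B}\not\equiv 0$ both sides of (\ref{eq:ZDS}) are nontrivial.

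The main obstacle is the sufficiency construction: two sign hypotheses on a pair of extremal rows of $\mathcal{A}_j$ must be parlayed into a globally consistent choice of coefficients together with a valid conditional distribution at every $\bm{a}'$. The key idea that unlocks this is that the two extremal actions supply a freely tunable one-parameter family of mixtures, with $K$ large enough to absorb the bounded fluctuations of $\tilde{B}$ along the non-extremal rows; the forward direction, by contrast, is essentially immediate once the interval interpretation of (\ref{eq:ZDS}) is in place.
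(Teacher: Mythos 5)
Your proof is correct, but note that the paper itself contains no proof of Proposition \ref{prop:existence}: the result is imported from Ref.\ \cite{Ued2022b} with only the remark that finiteness of the action sets makes it applicable, so your proposal has to be judged as a self-contained reconstruction, which it successfully is. The reduction of (\ref{eq:ZDS}) to the pointwise interval condition $c_{\vec{a}^\prime_j}+\tilde{B}(\bm{a}^\prime)\in[c^{\min},c^{\max}]$ (using that $T_j(\cdot|\bm{a}^\prime)$ is a distribution, so the left-hand side is a convex combination of the $c_{\vec{a}_j}$), the extraction of $\overline{\vec{a}}_j,\underline{\vec{a}}_j$ as an argmin/argmax of the coefficients in the necessity direction, and the explicit three-level construction $c\in\{0,K/2,K\}$ with $K>2\max_{\bm{a}}|\tilde{B}(\bm{a})|$ in the sufficiency direction are all sound; this is the same two-extremal-action mechanism that underlies the cited result, and your handling of the nontriviality clause of Definition \ref{def:ZDS} (both in ruling out $c^{\max}=c^{\min}$ and in checking that the constructed strategy is nontrivial because $\tilde{B}\not\equiv 0$) is correct.

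One point should be made explicit: your sufficiency construction assigns the distinct values $0$ and $K$ to $\overline{\vec{a}}_j$ and $\underline{\vec{a}}_j$, hence tacitly assumes $\overline{\vec{a}}_j\neq\underline{\vec{a}}_j$. This is the right reading of ``two actions'' in the statement, and it is in fact forced: if one allowed $\overline{\vec{a}}_j=\underline{\vec{a}}_j$ (a row on which $\tilde{B}$ vanishes identically), the ``if'' direction would be false. For instance, with $\mathcal{A}_j=\{0,1\}$, $\tilde{B}(0,\cdot)\equiv 0$, and $\tilde{B}(1,\cdot)$ taking the values $+1$ and $-1$, your own interval criterion shows no admissible coefficients exist: one would need $c_1+1\leq c^{\max}$, forcing $c^{\max}=c_0$ and $c^{\min}=c_1$, and then $c_1-1\geq c^{\min}=c_1$ is impossible. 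Since your necessity argument automatically produces \emph{distinct} actions (via $c^{\max}>c^{\min}$, derived from nontriviality), the equivalence you prove is exactly the intended statement; just state the distinctness assumption when invoking the construction, so that the degenerate single-action case is visibly excluded.
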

This condition is intuitive: A ZD strategy stochastically switches two actions in (\ref{eq:condition_exsitence}) in order to unilaterally adjust total $\tilde{B}$.
We call the condition (\ref{eq:condition_exsitence}) in Proposition \ref{prop:existence} as the \emph{autocratic condition}.
Obviously, if a memory-one strategy of player $j$ takes more $\overline{\vec{a}}_j$ than a ZD strategy controlling $\tilde{B}$, the strategy can unilaterally control payoffs so as to  $\left\langle \tilde{B} \right\rangle^{*}\geq 0$.
Regional payoff control also implicitly uses the autocratic condition, since it uses specific types of strategies inspired from ZD strategies in social dilemma games \cite{HTS2015,Aki2016,HLZ2018}.

Finally, we remark that the autocratic condition (\ref{eq:condition_exsitence}) can be rewritten in the minimax form \cite{UedFuj2025}
\begin{align}
 \max_{\vec{a}_{j}} \min_{\vec{a}_{-j}} \tilde{B} \left( \vec{a}_j, \vec{a}_{-j} \right) &\geq 0 \nonumber \\
 \min_{\vec{a}_{j}} \max_{\vec{a}_{-j}} \tilde{B} \left( \vec{a}_j, \vec{a}_{-j} \right) &\leq 0.
 \label{eq:condition_exsitence_mod}
\end{align}
In Appendix \ref{app:minimax}, we provide the derivation of Eq. (\ref{eq:condition_exsitence_mod}).
This expression is simpler to use in the problems below.

\section{Results}
\label{sec:results}

\subsection{Existence condition}
In this subsection, we rewrite the autocratic condition (\ref{eq:condition_exsitence_mod}) into a simper form.
Below we introduce the notation
\begin{align}
 \tilde{B} \left( \bm{a} \right) &= \sum_{\mu=1}^M B^{(\mu)} \left( \bm{a}^{(\mu)} \right) + \alpha_0 \nonumber \\
 B^{(\mu)} \left( \bm{a}^{(\mu)} \right) &:= \sum_{k\in \mathcal{N}} \alpha_k s_k^{(\mu)} \left( \bm{a}^{(\mu)} \right),
 \label{eq:B}
\end{align}
and assume that each $B^{(\mu)}$ is not identically a constant.
\begin{proposition}
\label{prop:existence_MC}
The autocratic condition (\ref{eq:condition_exsitence_mod}) in multichannel games can be rewritten as
\begin{align}
 \sum_{\mu=1}^M \max_{a_{j}^{(\mu)}} \min_{a_{-j}^{(\mu)}} B^{(\mu)} \left( a_j^{(\mu)}, a_{-j}^{(\mu)} \right) + \alpha_0 &\geq 0 \nonumber \\
 \sum_{\mu=1}^M \min_{a_{j}^{(\mu)}} \max_{a_{-j}^{(\mu)}} B^{(\mu)} \left( a_j^{(\mu)}, a_{-j}^{(\mu)} \right) + \alpha_0 &\leq 0.
 \label{eq:condition_exsitence_MC}
\end{align}
\end{proposition}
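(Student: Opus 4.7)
The plan is to substitute the decomposition $\tilde{B}(\bm{a}) = \sum_{\mu=1}^M B^{(\mu)}(\bm{a}^{(\mu)}) + \alpha_0$ from (\ref{eq:B}) directly into the autocratic condition (\ref{eq:condition_exsitence_mod}), and then exploit the fact that action sets factor as Cartesian products across channels, that is, $\mathcal{A}_j = \prod_{\mu=1}^M A_j^{(\mu)}$ and $\mathcal{A}_{-j} = \prod_{\mu=1}^M \prod_{k\neq j} A_k^{(\mu)}$. Because the action in each channel can be chosen independently of the action in every other channel, the joint max and min over $\vec{a}_j$ and $\vec{a}_{-j}$ distribute over the sum of per-channel terms.

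Concretely, I will first fix $\vec{a}_j$ and evaluate the inner minimization. Since the summand $B^{(\mu)}(a_j^{(\mu)}, a_{-j}^{(\mu)})$ depends on $\vec{a}_{-j}$ only through its $\mu$-th coordinate $a_{-j}^{(\mu)}$, and the coordinates of $\vec{a}_{-j}$ range independently over $\prod_{k\neq j} A_k^{(\mu)}$, we get
\begin{align*}
\min_{\vec{a}_{-j}} \tilde{B}(\vec{a}_j, \vec{a}_{-j}) &= \sum_{\mu=1}^M \min_{a_{-j}^{(\mu)}} B^{(\mu)}(a_j^{(\mu)}, a_{-j}^{(\mu)}) + \alpha_0.
\end{align*}
Next I will take the maximum over $\vec{a}_j$; by the same independence argument applied to the coordinates of $\vec{a}_j$, this max distributes over the sum, yielding exactly the first inequality of (\ref{eq:condition_exsitence_MC}). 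The second inequality follows by the symmetric argument with the roles of min and max swapped.

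The ``main obstacle'' is really just to make the separability rigorous, but this is standard: for any family of functions $\{f_\mu(x_\mu, y_\mu)\}$ on product domains, $\min_{y} \sum_\mu f_\mu(x_\mu, y_\mu) = \sum_\mu \min_{y_\mu} f_\mu(x_\mu, y_\mu)$, because one can minimize coordinatewise without loss. The additive constant $\alpha_0$ simply passes through both the min and the max. Thus the proposition reduces to iterated application of this elementary fact, with no further subtlety.
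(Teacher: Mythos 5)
Your proposal is correct and takes essentially the same route as the paper's own proof: the paper likewise observes that $\tilde{B}$ is a sum of mutually independent per-channel terms and distributes $\max_{\vec{a}_j}\min_{\vec{a}_{-j}}$ over that sum in a single displayed computation, handling the second inequality symmetrically. Your two-step version (fix $\vec{a}_j$, distribute the min coordinatewise, then distribute the max) merely spells out the separability fact the paper invokes implicitly, with no substantive difference.
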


\begin{proof}
According to Eq. (\ref{eq:B}), $\tilde{B}$ consists of the sum of mutually independent terms, the $\mu$-th term of which is a function of only $\bm{a}^{(\mu)}$.
Therefore, the maximization and the minimization in the autocratic condition can be calculated independently in each $\mu$ as
\begin{align}
 \max_{\vec{a}_{j}} \min_{\vec{a}_{-j}} \tilde{B} \left( \vec{a}_j, \vec{a}_{-j} \right) &= \max_{\vec{a}_{j}} \min_{\vec{a}_{-j}} \left[ \sum_{\mu=1}^M B^{(\mu)} \left( \bm{a}^{(\mu)} \right) + \alpha_0 \right]  \nonumber \\
 &= \max_{\left( a_j^{(1)}, \cdots, a_j^{(M)} \right)} \min_{\left( a_{-j}^{(1)}, \cdots, a_{-j}^{(M)} \right)} \left[ B^{(1)} \left( a_j^{(1)}, a_{-j}^{(1)} \right) + \cdots + B^{(M)} \left( a_j^{(M)}, a_{-j}^{(M)} \right) \right] + \alpha_0  \nonumber \\
 &= \sum_{\mu=1}^M \max_{a_{j}^{(\mu)}} \min_{a_{-j}^{(\mu)}} B^{(\mu)} \left( a_j^{(\mu)}, a_{-j}^{(\mu)} \right) + \alpha_0.
\end{align}
The second inequality in Eq. (\ref{eq:condition_exsitence_MC}) is obtained by the similar calculation.
\end{proof}

Proposition \ref{prop:existence_MC} results from the fact that $\tilde{B}$ can be decomposed into the sum of independent terms for each $\mu$ in multichannel games.
Although this proposition is almost trivial, it is useful to analyze the existence of ZD strategies in multichannel games below.

It is noteworthy that the existence condition of a ZD strategy controlling $B^{(\mu)}+\alpha_0^{(\mu)}$ in channel $\mu$ is given by \cite{Ued2022b}
\begin{align}
 \max_{a_{j}^{(\mu)}} \min_{a_{-j}^{(\mu)}} B^{(\mu)} \left( a_j^{(\mu)}, a_{-j}^{(\mu)} \right) + \alpha_0^{(\mu)} &\geq 0 \nonumber \\
 \min_{a_{j}^{(\mu)}} \max_{a_{-j}^{(\mu)}} B^{(\mu)} \left( a_j^{(\mu)}, a_{-j}^{(\mu)} \right) + \alpha_0^{(\mu)} &\leq 0.
 \label{eq:condition_exsitence_single}
\end{align}
Therefore, if a ZD strategy controlling $B^{(\mu)}+\alpha_0^{(\mu)}$ exists for every channel $\mu$, a ZD strategy controlling $\tilde{B}$ with $\alpha_0= \sum_{\mu=1}^M \alpha_0^{(\mu)}$ in a multichannel game also exists, since Eq. (\ref{eq:condition_exsitence_MC}) is satisfied.
However, the autocratic condition (\ref{eq:condition_exsitence_MC}) can be satisfied even if ZD strategies do not exist in some channels.

\subsection{Equalizer strategies in multichannel prisoner's dilemma game}
\label{subsec:equalizer_PD}
As an example, we consider a multichannel prisoner's dilemma game \cite{DHNH2020}, where $\mathcal{N}=\{ 1, 2 \}$, $A_j^{(\mu)}=\{ C, D \}$ $(j=1, 2; \mu\in \{1, \dots, M \})$, and $s_j^{(\mu)}$ is given by
\begin{align}
 \left( s_1^{(\mu)}(C,C), s_1^{(\mu)}(C,D), s_1^{(\mu)}(D,C), s_1^{(\mu)}(D,D) \right) &= \left( R^{(\mu)}, S^{(\mu)}, T^{(\mu)}, P^{(\mu)} \right) \nonumber \\
 \left( s_2^{(\mu)}(C,C), s_2^{(\mu)}(C,D), s_2^{(\mu)}(D,C), s_2^{(\mu)}(D,D) \right) &= \left( R^{(\mu)}, T^{(\mu)}, S^{(\mu)}, P^{(\mu)} \right)
 \label{eq:payoff_PD}
\end{align}
for all $\mu \in \{1, \dots, M \}$.
The payoffs satisfy $T^{(\mu)}>R^{(\mu)}>P^{(\mu)}>S^{(\mu)}$ and $2R^{(\mu)}>T^{(\mu)}+S^{(\mu)}$ for each $\mu \in \{1, \dots, M \}$.
The actions $C$ and $D$ represent cooperation and defection, respectively.

Here, we consider the equalizer strategy \cite{BNS1997,PreDys2012}, which unilaterally sets the payoff of the opponent.
In order to consider such strategy of player $1$, we need to set $\tilde{B}$ as
\begin{align}
 \tilde{B} \left( \bm{a} \right) &= \tilde{s}_2\left( \bm{a} \right) - r \nonumber \\
 &=  \sum_{\mu=1}^M s_2^{(\mu)} \left( \bm{a}^{(\mu)} \right) - r
\end{align}
with $r\in \mathbb{R}$.
The left-hand sides of Eq. (\ref{eq:condition_exsitence_MC}) are 
\begin{align}
 \sum_{\mu=1}^M \max_{a_{1}^{(\mu)}} \min_{a_{2}^{(\mu)}} s_2^{(\mu)} \left( a_1^{(\mu)}, a_{2}^{(\mu)} \right) - r &= \sum_{\mu=1}^M R^{(\mu)} - r
\end{align}
and 
\begin{align}
 \sum_{\mu=1}^M \min_{a_{1}^{(\mu)}} \max_{a_{2}^{(\mu)}} s_2^{(\mu)} \left( a_1^{(\mu)}, a_{2}^{(\mu)} \right) - r &= \sum_{\mu=1}^M P^{(\mu)} - r.
\end{align}
Therefore, according to Proposition \ref{prop:existence_MC}, equalizer strategies exist only for
\begin{align}
 \sum_{\mu=1}^M P^{(\mu)} &\leq r \leq \sum_{\mu=1}^M R^{(\mu)}.
 \label{eq:equalizer_PD}
\end{align}

This result is essentially the same as one reported in Ref. \cite{SCFet2025}.
However, this result is trivial, because when the inequality (\ref{eq:equalizer_PD}) is satisfied, we can adopt an equalizer strategy in each channel independently \cite{BNS1997,Ued2022b}.
Therefore, this result does not use any special properties of multichannel games.

\subsection{Combining two different games}
\label{subsec:PD_MP}
As another example, we consider a two-channel game, where channel $1$ is the prisoner's dilemma game (\ref{eq:payoff_PD}) and channel $2$ is the matching pennies game.
The matching pennies game \cite{FudTir1991} is one of the simplest zero-sum games given by $\mathcal{N}=\{ 1, 2 \}$, $A_j^{(2)}=\{ 1, 2 \}$ $(j=1, 2)$, and
\begin{align}
 \left( s_1^{(2)}(1,1), s_1^{(2)}(1,2), s_1^{(2)}(2,1), s_1^{(2)}(2,2) \right) &= \left( 1, -1, -1, 1 \right) \nonumber \\
 \left( s_2^{(2)}(1,1), s_2^{(2)}(1,2), s_2^{(2)}(2,1), s_2^{(2)}(2,2) \right) &= \left( -1, 1, 1, -1 \right).
 \label{eq:MP_2}
\end{align}

Here we again consider the equalizer strategies.
In order to consider an equalizer strategy of player $1$, we need to set $\tilde{B}$ as
\begin{align}
 \tilde{B} \left( \bm{a} \right) &= \tilde{s}_2\left( \bm{a} \right) - r \nonumber \\
 &=  \sum_{\mu=1}^2 s_2^{(\mu)} \left( \bm{a}^{(\mu)} \right) - r
\end{align}
with $r\in \mathbb{R}$.
The left-hand sides of Eq. (\ref{eq:condition_exsitence_MC}) are 
\begin{align}
 \sum_{\mu=1}^2 \max_{a_{1}^{(\mu)}} \min_{a_{2}^{(\mu)}} s_2^{(\mu)} \left( a_1^{(\mu)}, a_{2}^{(\mu)} \right) - r &= R^{(1)} - 1 - r
\end{align}
and 
\begin{align}
 \sum_{\mu=1}^2 \min_{a_{1}^{(\mu)}} \max_{a_{2}^{(\mu)}} s_2^{(\mu)} \left( a_1^{(\mu)}, a_{2}^{(\mu)} \right) - r &= P^{(1)} + 1 - r.
\end{align}
Therefore, according to Proposition \ref{prop:existence_MC}, equalizer strategies exist only for
\begin{align}
 P^{(1)} + 1 &\leq r \leq R^{(1)} - 1.
\end{align}
For such $r$ to exist, the payoffs must satisfy $P^{(1)} \leq R^{(1)} - 2$.
For example, when $\left( R^{(1)}, S^{(1)}, T^{(1)}, P^{(1)} \right)=(3, 0, 5, 1)$, an equalizer strategy with $r=2$ exists.

In order to check the validity of this result, we perform a numerical simulation for this equalizer strategy.
In Figure \ref{fig:linear_combine}, we display performance of the equalizer strategy of player $1$ against $1000$ randomly generated memory-one strategies of player $2$ for this parameter value.
\begin{figure}
\begin{center}
\includegraphics[clip, width=8.0cm]{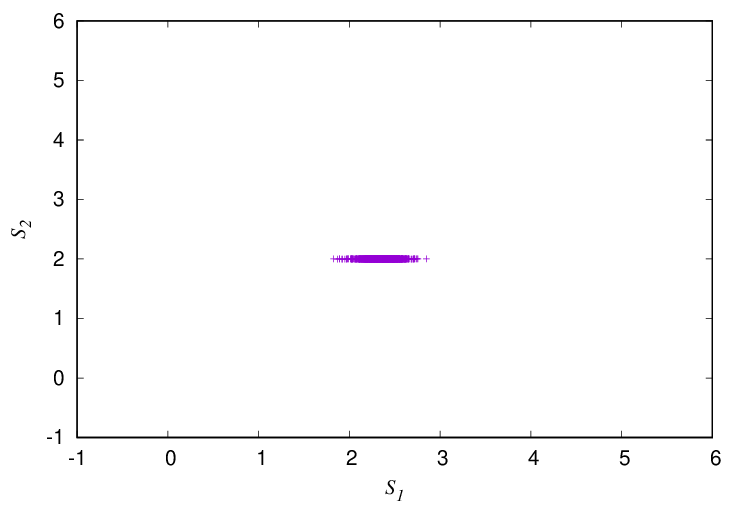}
\end{center}
\caption{A linear relation between $\left\langle \tilde{s}_1 \right\rangle^{*}$ and $\left\langle \tilde{s}_2 \right\rangle^{*}$ when player $1$ uses the equalizer strategy with $r=2$ and player $2$ uses $1000$ randomly generated memory-one strategies. The payoffs in channel $1$ are set to $\left( R^{(1)}, S^{(1)}, T^{(1)}, P^{(1)} \right)=(3, 0, 5, 1)$, and the payoffs in channel $2$ are set as in Eq. (\ref{eq:MP_2}). Each $\left\langle \tilde{s}_j \right\rangle^{*}$ is calculated by time average over $10^6$ time steps. The equalizer strategy indeed enforces a linear relation $\left\langle \tilde{s}_2 \right\rangle^{*}=2$.}
\label{fig:linear_combine}
\end{figure}
Details of construction of the ZD strategy are provided in Appendix \ref{app:simulation}.
We find that the equalizer strategy effectively enforces a linear relation $\left\langle \tilde{s}_2 \right\rangle^{*}=2$.

It should be noted that the matching pennies game itself does not contain any equalizer strategies \cite{UedFuj2025}.
However, by combining two different games, equalizer strategies become to exist.
Although such situation was conjectured in the previous study \cite{SCFet2025}, it was not explicitly shown.

\subsection{Absence of nontrivial equalizer strategies in multichannel games}
Next, we prove that nontrivial equalizer strategies do not exist in multichannel games.
We set $\mathcal{N}=\{ 1, 2 \}$, and consider equalizer strategies of player $1$.
We call an equalizer strategy in a multichannel game \emph{nontrivial} if no equalizer strategies exist in all channels but they exist in the multichannel game.

We first show the following lemma.
\begin{lemma}
\label{lem:existence_equalizer}
Equalizer strategies of player $1$ exist in channel $\mu$ if and only if the inequality
\begin{align}
 \max_{a_{1}^{(\mu)}} \min_{a_{2}^{(\mu)}} s_2^{(\mu)} \left( a_1^{(\mu)}, a_2^{(\mu)} \right) &\geq \min_{a_{1}^{(\mu)}} \max_{a_{2}^{(\mu)}} s_2^{(\mu)} \left( a_1^{(\mu)}, a_2^{(\mu)} \right)
 \label{eq:existence_equalizer_single}
\end{align}
holds.
\end{lemma}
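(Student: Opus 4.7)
The plan is to apply the single-channel autocratic condition (Eq.~(\ref{eq:condition_exsitence_single})) directly to the quantity that an equalizer strategy is designed to control, namely $\tilde{B}(\bm{a}^{(\mu)}) = s_2^{(\mu)}(\bm{a}^{(\mu)}) - r$, where $r$ is the value to which player $1$ tries to pin player $2$'s payoff. Since we are restricting to channel $\mu$ alone, this is a standard one-channel ZD-strategy existence question, with $\alpha_2 = 1$, all other $\alpha_k = 0$, and $\alpha_0^{(\mu)} = -r$.

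First, I would observe that by Proposition~\ref{prop:existence} (or equivalently Eq.~(\ref{eq:condition_exsitence_single})), an equalizer strategy of player $1$ in channel $\mu$ with target value $r$ exists if and only if
\begin{align}
 \max_{a_{1}^{(\mu)}} \min_{a_{2}^{(\mu)}} s_2^{(\mu)} \left( a_1^{(\mu)}, a_2^{(\mu)} \right) - r &\geq 0, \nonumber \\
 \min_{a_{1}^{(\mu)}} \max_{a_{2}^{(\mu)}} s_2^{(\mu)} \left( a_1^{(\mu)}, a_2^{(\mu)} \right) - r &\leq 0.
\end{align}
Rearranging, the set of admissible $r$ is precisely the closed interval
\begin{align}
 \min_{a_{1}^{(\mu)}} \max_{a_{2}^{(\mu)}} s_2^{(\mu)} \left( a_1^{(\mu)}, a_2^{(\mu)} \right) \leq r \leq \max_{a_{1}^{(\mu)}} \min_{a_{2}^{(\mu)}} s_2^{(\mu)} \left( a_1^{(\mu)}, a_2^{(\mu)} \right).
\end{align}

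Second, I would conclude that at least one such $r$ exists if and only if this interval is nonempty, which is exactly the inequality (\ref{eq:existence_equalizer_single}). One also needs to check that $\tilde{B}$ is not identically zero in the degenerate equality case, but this is automatic since if $\tilde{B} \equiv 0$ then $s_2^{(\mu)}$ would be a constant $r$, in which case any memory-one strategy trivially equalizes player $2$'s payoff and the claim still holds vacuously.

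There is no real obstacle here: the lemma is essentially a direct specialization of the single-channel existence criterion recalled just above. The only subtle point worth stating carefully is that the existence claim is \emph{for some $r$}, i.e.\ we are asking whether any equalizer strategy exists at all, not whether one exists for a prescribed $r$; this is exactly why the condition takes the max-min $\geq$ min-max form, as the set of feasible $r$ is an interval which is nonempty precisely under (\ref{eq:existence_equalizer_single}).
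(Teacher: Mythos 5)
Your proposal is correct and takes essentially the same route as the paper: both arguments specialize the single-channel autocratic condition (\ref{eq:condition_exsitence_single}) to $\tilde{B} = s_2^{(\mu)} - r$ and observe that an admissible $r$ exists precisely when the interval $\left[ \min_{a_1^{(\mu)}} \max_{a_2^{(\mu)}} s_2^{(\mu)},\ \max_{a_1^{(\mu)}} \min_{a_2^{(\mu)}} s_2^{(\mu)} \right]$ is nonempty, which is exactly (\ref{eq:existence_equalizer_single}); the paper merely phrases this as two separate implications rather than as interval nonemptiness. Your additional remark on the degenerate case where $s_2^{(\mu)}$ is constant (so that the only admissible $r$ makes $\tilde{B}$ identically zero, which Definition \ref{def:ZDS} formally excludes) addresses a corner the paper's proof silently passes over, and does not affect the substance of the argument.
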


\begin{proof}
If equalizer strategies of player $1$ exist, according to Eq. (\ref{eq:condition_exsitence_single}), there exists a real number $r^{(\mu)}$ such that
\begin{align}
 \max_{a_{1}^{(\mu)}} \min_{a_{2}^{(\mu)}} s_2^{(\mu)} \left( a_1^{(\mu)}, a_2^{(\mu)} \right) - r^{(\mu)} &\geq 0 \nonumber \\
 \min_{a_{1}^{(\mu)}} \max_{a_{2}^{(\mu)}} s_2^{(\mu)} \left( a_1^{(\mu)}, a_2^{(\mu)} \right) - r^{(\mu)} &\leq 0.
\end{align}
This means
\begin{align}
 \max_{a_{1}^{(\mu)}} \min_{a_{2}^{(\mu)}} s_2^{(\mu)} \left( a_1^{(\mu)}, a_2^{(\mu)} \right) &\geq r^{(\mu)} \geq \min_{a_{1}^{(\mu)}} \max_{a_{2}^{(\mu)}} s_2^{(\mu)} \left( a_1^{(\mu)}, a_2^{(\mu)} \right)
\end{align}
and the inequality (\ref{eq:existence_equalizer_single}) holds.

Conversely, if the inequality (\ref{eq:existence_equalizer_single}) holds, we can choose a real number $r^{(\mu)}$ such that inequalities
\begin{align}
 \max_{a_{1}^{(\mu)}} \min_{a_{2}^{(\mu)}} s_2^{(\mu)} \left( a_1^{(\mu)}, a_2^{(\mu)} \right) &\geq r^{(\mu)} \geq \min_{a_{1}^{(\mu)}} \max_{a_{2}^{(\mu)}} s_2^{(\mu)} \left( a_1^{(\mu)}, a_2^{(\mu)} \right)
\end{align}
hold.
This implies the existence condition (\ref{eq:condition_exsitence_single}) of equalizer strategies
\begin{align}
 \max_{a_{1}^{(\mu)}} \min_{a_{2}^{(\mu)}} s_2^{(\mu)} \left( a_1^{(\mu)}, a_2^{(\mu)} \right) - r^{(\mu)} &\geq 0 \nonumber \\
 \min_{a_{1}^{(\mu)}} \max_{a_{2}^{(\mu)}} s_2^{(\mu)} \left( a_1^{(\mu)}, a_2^{(\mu)} \right) - r^{(\mu)} &\leq 0
\end{align}
holds.
\end{proof}

By using this lemma, we prove that nontrivial equalizer strategies do not exist in multichannel games.
\begin{theorem}
\label{thm:absence_equalizer}
If equalizer strategies do not exist in all channels, then equalizer strategies do not exist in the multichannel game.
\end{theorem}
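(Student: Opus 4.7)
The plan is to combine the channelwise equivalence in Lemma \ref{lem:existence_equalizer} with the minimax form of the autocratic condition provided by Proposition \ref{prop:existence_MC}, applied to the test function $\tilde{B}(\bm{a}) = \tilde{s}_2(\bm{a}) - r$ corresponding to equalizer strategies of player $1$. The whole argument reduces to comparing two sums of one-channel max-min and min-max quantities, so it will essentially be a one-line contradiction once the setup is in place.

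First, I would note that for any real-valued function of two variables the standard inequality $\max_{a_1}\min_{a_2} \leq \min_{a_1}\max_{a_2}$ holds. Hence by the contrapositive of Lemma \ref{lem:existence_equalizer}, the assumption that equalizer strategies of player $1$ fail to exist in channel $\mu$ is equivalent to the strict inequality
\begin{align}
\max_{a_{1}^{(\mu)}} \min_{a_{2}^{(\mu)}} s_2^{(\mu)}\!\left(a_1^{(\mu)}, a_2^{(\mu)}\right) &< \min_{a_{1}^{(\mu)}} \max_{a_{2}^{(\mu)}} s_2^{(\mu)}\!\left(a_1^{(\mu)}, a_2^{(\mu)}\right).
\end{align}

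Second, I would invoke Proposition \ref{prop:existence_MC} with $\tilde{B}(\bm{a}) = \sum_\mu s_2^{(\mu)}(\bm{a}^{(\mu)}) - r$, which recasts the existence of an equalizer strategy in the multichannel game as the existence of some $r \in \mathbb{R}$ satisfying
\begin{align}
\sum_{\mu=1}^M \min_{a_{1}^{(\mu)}} \max_{a_{2}^{(\mu)}} s_2^{(\mu)}\!\left(a_1^{(\mu)}, a_2^{(\mu)}\right) \leq r \leq \sum_{\mu=1}^M \max_{a_{1}^{(\mu)}} \min_{a_{2}^{(\mu)}} s_2^{(\mu)}\!\left(a_1^{(\mu)}, a_2^{(\mu)}\right).
\end{align}

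Finally, summing the strict inequalities from the first step over $\mu=1,\dots,M$ yields
\begin{align}
\sum_{\mu=1}^M \max_{a_{1}^{(\mu)}} \min_{a_{2}^{(\mu)}} s_2^{(\mu)} &< \sum_{\mu=1}^M \min_{a_{1}^{(\mu)}} \max_{a_{2}^{(\mu)}} s_2^{(\mu)},
\end{align}
which makes the interval in the second step empty, so no admissible $r$ exists, proving the theorem. There is essentially no obstacle here; the only subtlety worth flagging is that strict inequality in each channel is essential, since a single equality would leave open the possibility that the bounds on $r$ collapse to a consistent single point — but Lemma \ref{lem:existence_equalizer} combined with the assumption rules this out channelwise, and strictness is preserved under finite summation.
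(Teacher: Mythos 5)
Your proof is correct and takes essentially the same route as the paper's own: negating Lemma \ref{lem:existence_equalizer} channelwise to obtain strict inequalities $\max\min < \min\max$ in every channel, summing over $\mu$, and concluding via Proposition \ref{prop:existence_MC} that no admissible $r$ (the paper's $\alpha_0$) exists. Your added remark that strictness is preserved under finite summation merely makes explicit a point the paper leaves implicit, so there is nothing further to flag.
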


\begin{proof}
If equalizer strategies of player $1$ do not exist in all channels, according to Lemma \ref{lem:existence_equalizer}, we obtain
\begin{align}
 \max_{a_{1}^{(\mu)}} \min_{a_{2}^{(\mu)}} s_2^{(\mu)} \left( a_1^{(\mu)}, a_2^{(\mu)} \right) &< \min_{a_{1}^{(\mu)}} \max_{a_{2}^{(\mu)}} s_2^{(\mu)} \left( a_1^{(\mu)}, a_2^{(\mu)} \right)
\end{align}
for all $\mu$.
By considering the sum of both sides with respect to $\mu$, we obtain
\begin{align}
 \sum_{\mu=1}^M \max_{a_{1}^{(\mu)}} \min_{a_{2}^{(\mu)}} s_2^{(\mu)} \left( a_1^{(\mu)}, a_2^{(\mu)} \right) &< \sum_{\mu=1}^M \min_{a_{1}^{(\mu)}} \max_{a_{2}^{(\mu)}} s_2^{(\mu)} \left( a_1^{(\mu)}, a_2^{(\mu)} \right).
\end{align}
Then, we find that $\alpha_0$ in Proposition \ref{prop:existence_MC} does not exist, and conclude that equalizer strategies of player $1$ do not exist in the multichannel game.
\end{proof}

The contraposition of Theorem \ref{thm:absence_equalizer} is ``If an equalizer strategy exists in a multichannel game, then an equalizer strategy exists at least in one channel.''.
We remark that the example in subsection \ref{subsec:PD_MP} corresponds to a trivial case, because the prisoner's dilemma game contains equalizer strategies.
According to Lemma \ref{lem:existence_equalizer}, equalizer strategies of player $1$ in channel $\mu$ exist if and only if the margin
\begin{align}
 \Delta^{(\mu)} &:= \max_{a_{1}^{(\mu)}} \min_{a_{2}^{(\mu)}} s_2^{(\mu)} \left( a_1^{(\mu)}, a_2^{(\mu)} \right) - \min_{a_{1}^{(\mu)}} \max_{a_{2}^{(\mu)}} s_2^{(\mu)} \left( a_1^{(\mu)}, a_2^{(\mu)} \right)
\end{align}
satisfies $\Delta^{(\mu)} \geq 0$.
Similarly, according to the proof of Theorem \ref{thm:absence_equalizer}, equalizer strategies in the multichannel game exist if and only if the total margin $\sum_{\mu=1}^M \Delta^{(\mu)}$ satisfies
\begin{align}
 \sum_{\mu=1}^M \Delta^{(\mu)} &\geq 0.
\end{align}
This quantity enables us to easily judge whether equalizer strategies exist in the multichannel game or not.
For example, for the example in Section \ref{subsec:equalizer_PD}, we obtain
\begin{align}
 \sum_{\mu=1}^M \Delta^{(\mu)} &= \sum_{\mu=1}^M R^{(\mu)} - \sum_{\mu=1}^M P^{(\mu)}  > 0.
\end{align}
In contrast, for the example in Section \ref{subsec:PD_MP}, we obtain
\begin{align}
 \sum_{\mu=1}^2 \Delta^{(\mu)} &= R^{(1)} - 1 - \left( P^{(1)} + 1 \right),
\end{align}
which is nonnegative if and only if $P^{(1)} \leq R^{(1)} - 2$.

It is also noteworthy that the results in this subsection can be easily extended to general ZD strategies, by replacing $s_2^{(\mu)}$ and $\left( a_1^{(\mu)}, a_2^{(\mu)} \right)$ with $B^{(\mu)}$ and $\left( a_j^{(\mu)}, a_{-j}^{(\mu)} \right)$, respectively.
\begin{theorem}
\label{thm:absence_ZD}
If ZD strategies of player $j$ setting the values of $B^{(\mu)}$ do not exist in all channels $\mu$, then ZD strategies of player $j$ setting the values of $\tilde{B}$ do not exist in the multichannel game.
\end{theorem}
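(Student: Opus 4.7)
The plan is to run the proof of Theorem \ref{thm:absence_equalizer} almost verbatim, substituting $B^{(\mu)}$ for $s_2^{(\mu)}$ throughout. Inspecting the earlier argument, one sees that $s_2^{(\mu)}$ entered only as an arbitrary real-valued function on $\mathcal{A}^{(\mu)}$; the function $B^{(\mu)}=\sum_{k\in\mathcal{N}} \alpha_k s_k^{(\mu)}$ is another such function, and crucially it uses the same coefficients $\alpha_k$ that appear in the multichannel $\tilde{B}$. This shared vector of coefficients is what links the single-channel and multichannel existence problems to each other through Proposition \ref{prop:existence_MC}.

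First I would generalize Lemma \ref{lem:existence_equalizer} to the following statement: a ZD strategy of player $j$ in channel $\mu$ controlling $B^{(\mu)}+\alpha_0^{(\mu)}$ for \emph{some} scalar $\alpha_0^{(\mu)}$ exists if and only if
\[
\max_{a_j^{(\mu)}} \min_{a_{-j}^{(\mu)}} B^{(\mu)}\bigl(a_j^{(\mu)}, a_{-j}^{(\mu)}\bigr) \;\geq\; \min_{a_j^{(\mu)}} \max_{a_{-j}^{(\mu)}} B^{(\mu)}\bigl(a_j^{(\mu)}, a_{-j}^{(\mu)}\bigr).
\]
This is immediate from the single-channel existence condition (\ref{eq:condition_exsitence_single}): a suitable $\alpha_0^{(\mu)}$ can be slotted between the max-min and the min-max if and only if the closed interval between them is non-empty. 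The argument is literally the proof of Lemma \ref{lem:existence_equalizer} with the substitution $s_2^{(\mu)}\mapsto B^{(\mu)}$.

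Then I would mirror the proof of Theorem \ref{thm:absence_equalizer}. The hypothesis that no channel admits such a ZD strategy, combined with the generalized lemma, yields the strict inequality $\max_{a_j^{(\mu)}}\min_{a_{-j}^{(\mu)}} B^{(\mu)} < \min_{a_j^{(\mu)}}\max_{a_{-j}^{(\mu)}} B^{(\mu)}$ for every $\mu$. Summing over $\mu$ gives
\[
\sum_{\mu=1}^M \max_{a_j^{(\mu)}}\min_{a_{-j}^{(\mu)}} B^{(\mu)} \;<\; \sum_{\mu=1}^M \min_{a_j^{(\mu)}}\max_{a_{-j}^{(\mu)}} B^{(\mu)},
\]
so no scalar $\alpha_0$ can simultaneously satisfy both inequalities of Proposition \ref{prop:existence_MC}, and hence no ZD strategy of player $j$ controlling $\tilde{B}$ exists in the multichannel game.

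I do not foresee a real obstacle; the only point requiring a quick check is that the proof of Lemma \ref{lem:existence_equalizer} uses no structural property of $s_2^{(\mu)}$ beyond it being a real-valued function on $\mathcal{A}^{(\mu)}$, so the substitution is harmless. The non-triviality clause of Definition \ref{def:ZDS} ($\tilde{B}$ not identically zero) is subsumed by the conclusion, since we are establishing non-existence.
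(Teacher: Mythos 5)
Your proposal is correct and is exactly the paper's intended argument: the paper gives no separate proof of Theorem \ref{thm:absence_ZD}, stating only that the equalizer results ``can be easily extended to general ZD strategies, by replacing $s_2^{(\mu)}$ with $B^{(\mu)}$,'' which is precisely your substitution of $B^{(\mu)}$ into Lemma \ref{lem:existence_equalizer} and the summation argument of Theorem \ref{thm:absence_equalizer}. Your observation that the lemma uses no structure of $s_2^{(\mu)}$ beyond being a real-valued function on $\mathcal{A}^{(\mu)}$, and your handling of the non-triviality clause, are at the same level of rigor as the paper itself.
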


\subsection{Absence of nontrivial fair ZD strategies in multichannel two-player symmetric games}
As another application of Proposition \ref{prop:existence_MC}, we prove that nontrivial fair ZD strategies do not exist in multichannel two-player symmetric games.
A two-player game is \emph{symmetric} if the payoffs satisfy $s_2^{(\mu)} \left( a_1^{(\mu)} , a_2^{(\mu)}  \right) = s_1^{(\mu)}  \left( a_2^{(\mu)} , a_1^{(\mu)}  \right)$ for all $\left( a_1^{(\mu)} , a_2^{(\mu)}  \right)$ and all $\mu$.
A ZD strategy is \emph{fair} in a two-player symmetric game if it unilaterally enforces 
\begin{align}
 \left\langle \tilde{s}_1 \right\rangle^{*} &= \left\langle \tilde{s}_2 \right\rangle^{*}.
 \label{eq:linear_fair}
\end{align}
An example of fair ZD strategies in repeated games is the Tit-for-Tat strategy \cite{RCO1965} in the prisoner's dilemma game \cite{PreDys2012}.
It has been known that fair ZD strategies exist in potential games \cite{Ued2022} and games with no generalized rock-paper-scissors cycles \cite{Ued2022b}.
Here we prove the following theorem.
\begin{theorem}
\label{thm:absence_fair}
If there exists a fair ZD strategy controlling $\tilde{s}_1-\tilde{s}_2$ in a multichannel game, then there must exist a fair ZD strategy controlling $s_1^{(\mu)}-s_2^{(\mu)}$ in every channel $\mu$.
\end{theorem}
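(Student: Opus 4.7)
The approach is to squeeze each channel's maximin and minimax of $B^{(\mu)} := s_1^{(\mu)} - s_2^{(\mu)}$ to zero by combining two opposing bounds: one obtained by applying Proposition \ref{prop:existence_MC} to the whole multichannel game, and one obtained from the symmetry of each stage game individually.

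First I would specialize the multichannel autocratic condition (\ref{eq:condition_exsitence_MC}) to the fair ZD coefficients $\alpha_1 = 1$, $\alpha_2 = -1$, $\alpha_0 = 0$, so that $\tilde{B}(\bm{a}) = \sum_{\mu=1}^M B^{(\mu)}(\bm{a}^{(\mu)})$. Writing $V^{(\mu)} := \max_{a_1^{(\mu)}} \min_{a_2^{(\mu)}} B^{(\mu)}(a_1^{(\mu)}, a_2^{(\mu)})$ and $W^{(\mu)} := \min_{a_1^{(\mu)}} \max_{a_2^{(\mu)}} B^{(\mu)}(a_1^{(\mu)}, a_2^{(\mu)})$, the assumed existence of a fair ZD strategy in the multichannel game immediately gives
\[ \sum_{\mu=1}^M V^{(\mu)} \geq 0, \qquad \sum_{\mu=1}^M W^{(\mu)} \leq 0. \]

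Next I would invoke symmetry: because $s_1^{(\mu)}(a,a) = s_2^{(\mu)}(a,a)$ for every action $a$, we have $B^{(\mu)}(a,a) = 0$ for all $a$, and evaluating the inner minimization (respectively, maximization) at $a_2^{(\mu)} = a_1^{(\mu)}$ yields the per-channel bounds $V^{(\mu)} \leq 0$ and $W^{(\mu)} \geq 0$ for every single $\mu$. Putting these per-channel bounds against the summed multichannel bounds forces $V^{(\mu)} = 0$ and $W^{(\mu)} = 0$ for every $\mu$. But that is exactly the single-channel autocratic condition (\ref{eq:condition_exsitence_single}) with $\alpha_0^{(\mu)} = 0$, so by Proposition \ref{prop:existence} a fair ZD strategy controlling $s_1^{(\mu)} - s_2^{(\mu)}$ exists in every channel $\mu$.

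The argument is essentially a one-line squeeze once these two sets of bounds are in place; the only point that needs care is the clause in Definition \ref{def:ZDS} requiring $B^{(\mu)}$ not to be identically zero, since in channels where $s_1^{(\mu)} \equiv s_2^{(\mu)}$ the fair relation holds trivially and should be treated as a degenerate case rather than a genuine ZD strategy. The conceptual heart of the proof, and the step I expect to be the main obstacle to spot, is recognizing that symmetry pins each channel's maximin on the opposite side of zero from the multichannel sum, so the two inequalities can only be simultaneously satisfied with equality in every channel.
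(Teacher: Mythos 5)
Your proof is correct and follows essentially the same route as the paper's: both apply Proposition \ref{prop:existence_MC} to get the summed maximin/minimax bounds, then use the vanishing of $s_1^{(\mu)}-s_2^{(\mu)}$ on the diagonal (the paper phrases this via the anti-symmetric part $s_\mathrm{A}^{(\mu)}$, a harmless factor of $\tfrac{1}{2}$ away from your $B^{(\mu)}$) to pin each channel's $V^{(\mu)}\leq 0$ and $W^{(\mu)}\geq 0$, forcing equality channel by channel. Your closing remark about channels with $s_1^{(\mu)}\equiv s_2^{(\mu)}$, where the ``not identically zero'' clause of Definition \ref{def:ZDS} makes the single-channel conclusion degenerate, is a point the paper's proof silently glosses over, so it is a welcome refinement rather than a deviation.
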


\begin{proof}
According to Proposition \ref{prop:existence_MC}, a necessary and sufficient condition for the existence of a fair ZD strategy of player $1$ controlling $\tilde{s}_1-\tilde{s}_2$ in a multichannel game is
\begin{align}
 \sum_{\mu=1}^M \max_{a_{1}^{(\mu)}} \min_{a_{2}^{(\mu)}} \left[ s_1^{(\mu)} \left( a_1^{(\mu)}, a_{2}^{(\mu)} \right) - s_2^{(\mu)} \left( a_1^{(\mu)}, a_{2}^{(\mu)} \right) \right] &\geq 0 \nonumber \\
 \sum_{\mu=1}^M \min_{a_{1}^{(\mu)}} \max_{a_{2}^{(\mu)}} \left[ s_1^{(\mu)} \left( a_1^{(\mu)}, a_{2}^{(\mu)} \right) - s_2^{(\mu)} \left( a_1^{(\mu)}, a_{2}^{(\mu)} \right) \right] &\leq 0.
 \label{eq:condition_exsitence_fair}
\end{align}
We define the anti-symmetric part of the payoff by
\begin{align}
 s_\mathrm{A}^{(\mu)} \left( a_1^{(\mu)}, a_{2}^{(\mu)} \right) &:= \frac{1}{2} \left[ s_1^{(\mu)} \left( a_1^{(\mu)}, a_{2}^{(\mu)} \right) - s_2^{(\mu)} \left( a_1^{(\mu)}, a_{2}^{(\mu)} \right) \right].
\end{align}
It should be noted that the diagonal components satisfy $s_\mathrm{A}^{(\mu)}\left( a_1^{(\mu)}, a_1^{(\mu)} \right)=0$ for all $a_1^{(\mu)}$.
Then we find
\begin{align}
 \min_{a_{2}^{(\mu)}} s_\mathrm{A}^{(\mu)} \left( a_1^{(\mu)}, a_{2}^{(\mu)} \right) &\leq 0 \quad \left( \forall a_1^{(\mu)} \right) \nonumber \\
 \max_{a_{2}^{(\mu)}} s_\mathrm{A}^{(\mu)} \left( a_1^{(\mu)}, a_{2}^{(\mu)} \right) &\geq 0 \quad \left( \forall a_1^{(\mu)} \right)
\end{align}
for all $\mu$, which leads to
\begin{align}
 \max_{a_{1}^{(\mu)}} \min_{a_{2}^{(\mu)}} s_\mathrm{A}^{(\mu)} \left( a_1^{(\mu)}, a_{2}^{(\mu)} \right) &\leq 0 \nonumber \\
 \min_{a_{1}^{(\mu)}} \max_{a_{2}^{(\mu)}} s_\mathrm{A}^{(\mu)} \left( a_1^{(\mu)}, a_{2}^{(\mu)} \right) &\geq 0
\end{align}
for all $\mu$.
Therefore, if the condition (\ref{eq:condition_exsitence_fair}) is satisfied, we obtain
\begin{align}
 \max_{a_{1}^{(\mu)}} \min_{a_{2}^{(\mu)}} s_\mathrm{A}^{(\mu)} \left( a_1^{(\mu)}, a_{2}^{(\mu)} \right) &= 0 \nonumber \\
 \min_{a_{1}^{(\mu)}} \max_{a_{2}^{(\mu)}} s_\mathrm{A}^{(\mu)} \left( a_1^{(\mu)}, a_{2}^{(\mu)} \right) &= 0
\end{align}
for all $\mu$.
As we saw in Eq. (\ref{eq:condition_exsitence_single}), this implies that the existence condition of a fair ZD strategy in every channel $\mu$ is satisfied.
Thus, a fair ZD strategy controlling $s_1^{(\mu)}-s_2^{(\mu)}$ exists in every channel $\mu$.
\end{proof}

We remark that Theorem \ref{thm:absence_fair} is stronger than the contraposition of Theorem \ref{thm:absence_ZD} applied to fair ZD strategies.

\subsection{More complicated payoff control}
The autocratic condition given in this paper can be used even if $B^{(\mu)} $ is not written in the form of Eq. (\ref{eq:B}).
For example, we consider a two-channel prisoner's dilemma game (\ref{eq:payoff_PD}).
When we set 
\begin{align}
 \tilde{B} \left( \bm{a} \right) &= s_1^{(1)} \left( \bm{a}^{(1)} \right) + s_2^{(2)} \left( \bm{a}^{(2)} \right) - r
\end{align}
with $r\in \mathbb{R}$, the autocratic condition of player $1$ is Eq. (\ref{eq:condition_exsitence_MC}) with $B^{(1)}=s_1^{(1)}$ and $B^{(2)}=s_2^{(2)}$.
The left-hand sides are calculated as
\begin{align}
 \sum_{\mu=1}^2 \max_{a_{1}^{(\mu)}} \min_{a_{2}^{(\mu)}} B^{(\mu)} \left( a_1^{(\mu)}, a_2^{(\mu)} \right) - r &= P^{(1)} + R^{(2)} - r \nonumber \\
 \sum_{\mu=1}^2 \min_{a_{1}^{(\mu)}} \max_{a_{2}^{(\mu)}} B^{(\mu)} \left( a_1^{(\mu)}, a_2^{(\mu)} \right) - r &= R^{(1)} + P^{(2)} - r.
\end{align}
Therefore, according to Proposition \ref{prop:existence_MC}, when $r$ satisfies
\begin{align}
 R^{(1)} + P^{(2)} &\leq r \leq P^{(1)} + R^{(2)},
\end{align}
a ZD strategy of player $1$ unilaterally enforcing
\begin{align}
 0 &= \left\langle s_1^{(1)} \right\rangle^{*} + \left\langle s_2^{(2)} \right\rangle^{*} - r,
\end{align}
exists.
For such $r$ to exist, the inequality $R^{(1)} - P^{(1)} \leq R^{(2)} - P^{(2)}$ must be satisfied.

\section{Discussion and Conclusion}
\label{sec:conclusion}
In this paper, we investigated the existence condition of ZD strategies in multichannel games.
We found that the existence of ZD strategies in multichannel games generally requires the existence of ZD strategies in some channels.
We also showed that the existence of fair ZD strategies in multichannel games requires the existence of fair ZD strategies in all channels.
These results imply that the existence of ZD strategies in multichannel games is tightly restricted by structure of games played in each channel.
We also provide several examples of ZD strategies in multichannel games.

Extensions of our results to more complicated situations, such as games with imperfect monitoring \cite{HRZ2015}, games with discounting \cite{HTS2015}, games with alternating decision-making \cite{McAHau2017}, and games with longer memory \cite{Ued2022c}, are a subject of future work.
Furthermore, we also remark that investigation of the effectiveness of our results in games in structured populations \cite{SzoPer2014,SzoPer2014b} is a significant problem.
As noted in Section \ref{sec:intro}, multichannel games have things in common with multigames \cite{WSP2014}, which are different games played in parallel in structured populations.
Recently, a novel framework analyzing evolutionary dynamics of games on regular graphs was discovered \cite{WPS2024}.
We want to determine potential limits of our results in structured populations in future.
In addition, application of machine learning technique may be useful for analytically intractable problems \cite{IOUP2023,IOUP2025}.

The results of this paper are significant for investigating evolution of cooperation in multichannel games.
As noted in Section \ref{sec:intro}, theory of ZD strategies underlies theory of payoff control \cite{HCN2018}.
Particularly, a pair of equalizer strategies is a Nash equilibrium in two-player games, because each player has no incentive to deviate from the strategy.
In the multichannel prisoner's dilemma game, a cooperative Nash equilibrium is also realized by a pair of cooperative equalizer strategies.
Our Theorem \ref{thm:absence_equalizer} states that, for achieving such equilibrium, equalizer strategies must be used at least in one channel, and no nontrivial equalizer strategies exist.
This fact suggests that theory of payoff control in repeated games is sufficient to explain evolution of cooperation in multichannel games.
Furthermore, our Theorem \ref{thm:absence_fair} claims that, in order to construct a fair ZD strategy in multichannel two-player games, a player must use fair ZD strategies in all channels.
Fair ZD strategies, like the Tit-for-Tat strategy, are useful in evolution of cooperation, because they can invade arbitrary strategies by neutral drift, and create routes for cooperative strategies to evolve \cite{NowSig1992}.
Our results imply that players must use the Tit-for-Tat strategy in all channels in order to adopt a fair ZD strategy in the multichannel prisoner's dilemma game.
No complicated construction of fair ZD strategies is necessary for multichannel games.

\section*{Acknowledgement}
We thank Ken Nakamura for useful comments.
This study was supported by Toyota Riken Scholar Program.

\appendix

\section{Derivation of Eq. (\ref{eq:condition_exsitence_mod})}
\label{app:minimax}
We remark that Eq. (\ref{eq:condition_exsitence}) can be rewritten as
\begin{align}
 \min_{\vec{a}_{-j}} \tilde{B} \left( \overline{\vec{a}}_j, \vec{a}_{-j} \right) &\geq 0 \nonumber \\
 \max_{\vec{a}_{-j}} \tilde{B} \left( \underline{\vec{a}}_j, \vec{a}_{-j} \right) &\leq 0.
\end{align}
Therefore, we obtain
\begin{align}
 \max_{\vec{a}_{j}} \min_{\vec{a}_{-j}} \tilde{B} \left( \vec{a}_j, \vec{a}_{-j} \right) &\geq \min_{\vec{a}_{-j}} \tilde{B} \left( \overline{\vec{a}}_j, \vec{a}_{-j} \right) \geq 0 \nonumber \\
 \min_{\vec{a}_{j}} \max_{\vec{a}_{-j}} \tilde{B} \left( \vec{a}_j, \vec{a}_{-j} \right) &\leq \max_{\vec{a}_{-j}} \tilde{B} \left( \underline{\vec{a}}_j, \vec{a}_{-j} \right) \leq 0.
\end{align}

Conversely, if Eq. (\ref{eq:condition_exsitence_mod}) holds, we can define
\begin{align}
 \overline{\vec{a}}_j &:= \arg \max_{\vec{a}_{j}} \min_{\vec{a}_{-j}} \tilde{B} \left( \vec{a}_j, \vec{a}_{-j} \right) \nonumber \\
 \underline{\vec{a}}_j &:= \arg \min_{\vec{a}_{j}} \max_{\vec{a}_{-j}} \tilde{B} \left( \vec{a}_j, \vec{a}_{-j} \right)
\end{align}
as the two actions in the condition (\ref{eq:condition_exsitence}).

\section{Construction of ZD strategies in numerical simulation}
\label{app:simulation}
For the example of Section \ref{subsec:PD_MP}, according to Ref. \cite{Ued2022b}, if the autocratic condition is satisfied, a ZD strategy of player $j$ controlling $\tilde{B}$ is constructed as
\begin{align}
 T_j\left( \overline{\vec{a}}_j | \bm{a}^\prime \right) &= \delta_{\vec{a}_j^\prime, \overline{\vec{a}}_j} - \frac{1}{W} \left( \tilde{B}\left( \bm{a}^\prime \right) \mathbb{I}\left( \tilde{B}\left( \bm{a}^\prime \right)>0 \right) \delta_{\vec{a}_j^\prime, \overline{\vec{a}}_j} + \sum_{\vec{a}_j\neq \overline{\vec{a}}_j, \underline{\vec{a}}_j} \tilde{B}\left( \bm{a}^\prime \right) \mathbb{I}\left( \tilde{B}\left( \bm{a}^\prime \right)<0 \right) \delta_{\vec{a}_j^\prime, \vec{a}_j} \right) \nonumber \\
 T_j\left( \underline{\vec{a}}_j | \bm{a}^\prime \right) &= \delta_{\vec{a}_j^\prime, \underline{\vec{a}}_j} + \frac{1}{W} \left( \tilde{B}\left( \bm{a}^\prime \right) \mathbb{I}\left( \tilde{B}\left( \bm{a}^\prime \right)<0 \right) \delta_{\vec{a}_j^\prime, \underline{\vec{a}}_j} + \sum_{\vec{a}_j\neq \overline{\vec{a}}_j, \underline{\vec{a}}_j} \tilde{B}\left( \bm{a}^\prime \right) \mathbb{I}\left( \tilde{B}\left( \bm{a}^\prime \right)>0 \right) \delta_{\vec{a}_j^\prime, \vec{a}_j} \right) \nonumber \\
 T_j\left( \vec{a}_j | \bm{a}^\prime \right) &= \delta_{\vec{a}_j^\prime, \vec{a}_j} + \frac{1}{W} \left( - \left| \tilde{B}\left( \bm{a}^\prime \right) \right| \delta_{\vec{a}_j^\prime, \overline{\vec{a}}_j} - \frac{1}{L_j-2} \tilde{B}\left( \bm{a}^\prime \right) \mathbb{I}\left( \tilde{B}\left( \bm{a}^\prime \right)<0 \right) \delta_{\vec{a}_j^\prime, \underline{\vec{a}}_j} \right. \nonumber \\
 & \qquad \left. + \frac{1}{L_j-2} \tilde{B}\left( \bm{a}^\prime \right) \mathbb{I}\left( \tilde{B}\left( \bm{a}^\prime \right)>0 \right) \delta_{\vec{a}_j^\prime, \overline{\vec{a}}_j} \right)  \left( \vec{a}_j\neq \overline{\vec{a}}_j, \underline{\vec{a}}_j \right),
\end{align}
where $\mathbb{I}$ is the indicator function, $L_j:= \left| \mathcal{A}_j \right|$, and
\begin{align}
 W &:= \max_{\bm{a}\in \mathcal{A}} \left| \tilde{B}\left( \bm{a} \right) \right|.
\end{align}
For this case, $L_j=4$, and we can choose $\overline{\vec{a}}_1=(C,1)$ and $\underline{\vec{a}}_1=(D,2)$.
The quantity $W$ is calculated as $W=4$.

\section*{References}

\bibliography{ZDS}

\end{document}